\newtheorem{theorem}{Proposition}
\newtheorem{definition}{Definition}
\newcommand{\floor}[1]{\left\lfloor #1 \right\rfloor}
\begin{document}

\title{Bespoke Fractal Sampling Patterns for Discrete Fourier Space via the Kaleidoscope Transform}

\author{Jacob M. White, \IEEEmembership{Member, IEEE}, Stuart Crozier, and Shekhar S. Chandra, \IEEEmembership{Member, IEEE}

\thanks{J. M. White, S.S. Chandra and S. Crozier are with the University of Queensland, Brisbane, St Lucia QLD 4072, Australia (e-mail: uqjwhi35@uq.edu.au)}}
\markboth{Journal of \LaTeX\ Class Files, Vol. 14, No. 8, August 2015}
{Shell \MakeLowercase{\textit{et al.}}: Bare Demo of IEEEtran.cls for IEEE Journals}
\maketitle

\begin{abstract}
Sampling strategies are important for sparse imaging methodologies, especially those employing the \ac{DFT}. \Acl{ChaoS} is one such methodology that employs deterministic, fractal sampling in conjunction with finite, iterative reconstruction schemes to form an image from limited samples. Using a sampling pattern constructed entirely from periodic lines in \ac{DFT} space, \acl{ChaoS} was found to outperform traditional compressed sensing for \acl{MRI}; however, only one such sampling pattern was presented and the reason for its fractal nature was not proven. Through the introduction of a novel image transform known as the \acl{KT}, which formalises and extends upon the concept of downsampling and concatenating an image with itself, this paper: (1) demonstrates a fundamental relationship between multiplication in modular arithmetic and downsampling; (2) provides a rigorous mathematical explanation for the fractal nature of the sampling pattern in the \ac{DFT}; and (3) leverages this understanding to develop a collection of novel fractal sampling patterns for the 2D \ac{DFT} with customisable properties. The ability to design tailor-made fractal sampling patterns expands the utility of the \ac{DFT} in chaotic imaging and may form the basis for a bespoke \acl{ChaoS} methodology, in which the fractal sampling matches the imaging task for improved reconstruction. 

\end{abstract}

\begin{IEEEkeywords}
Chaotic Sensing, Fractals, Kaleidoscope Transform, Sparse Image Reconstruction
\end{IEEEkeywords}

\IEEEpeerreviewmaketitle

\section{Introduction}


\IEEEPARstart{T}{he} discrete Fourier transform (DFT) \acused{DFT}is an important and fundamental transform in many signal and imaging areas. One such area is \ac{MRI}, which visualises soft tissue in high resolution without the use of ionising radiation \cite{Ehman2017}; however, with long acquisition times and commensurately greater costs than other imaging modalities \cite{Hollingsworth2015} it often goes unused.

Arguably the most impactful development towards addressing the long acquisition times of \ac{MRI} is that of \ac{CS} \cite{Candes2006, Donoho2006} which, in the case of \ac{MRI}, allows an image to be significantly undersampled using a random sampling pattern in the \ac{DFT} of image space (often referred to as k-space) and then reconstructed with an iterative algorithm \cite{Lustig2007}; however, as in many practical applications, truly random measurements are infeasible in \ac{MRI}. Fortunately, alternatives exist. Yu \textit{et al.} \cite{Yu2010} developed a hardware-friendly sampling scheme which approximates the desired properties of random sampling by using sensing matrices populated with chaotic sequences. Theirs and subsequent works have demonstrated reconstruction performance on par with or better than random sampling \cite{Kafedziski2011, Zeng2014, Rontani2016, Chandra2018, Gan2019, Gan2019a}.

Similarly, \ac{ChaoS} proposes the use of a deterministic fractal sampling pattern for the \ac{DFT} and \ac{MRI}~\cite{Chandra2018}. The noise introduced by this sampling scheme is turbulent and thus image-independent, and may be easily removed through dampening and obtaining the maximum likelihood solution. Furthermore, as the fractal is constructed entirely from periodic lines in the \ac{DFT} based on the discrete Radon transform~\cite{Matus1993}, it may be feasibly acquired using MRI hardware \cite{Chandra2018}. \ac{ChaoS} appears to be a promising alternative to \ac{CS} for MRI, having been shown to outperform radial sampling and compressed sensing in \ac{SSIM} and \ac{VIF} when applied both to the Shepp-Logan phantom image with Gaussian noise added and to data experimentally gathered using a phantom constructed from Lego blocks in a liquid-filled tube \cite{Chandra2018}; however, although it was verified numerically that the \ac{ChaoS} pattern is a fractal, it is not well understood why this fractal structure arises \cite{Chandra2018}. 

In this work, a novel image transform known as the \ac{KT} is presented, which formalises and extends upon the concept of downsampling and concatenating an image with itself. Through proving that certain scaling operations in the \ac{DFT} are equivalent to \acp{KT}, the self-similar, fractal nature of the \ac{ChaoS} fractal is explained. This understanding is then leveraged to develop novel, customisable fractal sampling patterns for use in \ac{ChaoS} or any other methodology designed to employ the mixing of chaotic imaging information involving the \ac{DFT}. 


This paper is structured as follows. In the remainder of this section, the method of construction of the \ac{ChaoS} fractal is provided, and the potential utility of customisable variations of this pattern is justified. Section II then introduces the \ac{KT} and proves its relationship to scaling in the \ac{DFT}. Finally,  section III utilises the theory of the \ac{KT} to explain the fractal nature of the \ac{ChaoS} sampling pattern and presents novel fractal sampling patterns along with methods for their construction. 

\subsection{Chaotic Sensing Fractal}
The method for generating the fractal sampling pattern used in \ac{ChaoS} on an $N \times N$ grid, where $N$ is prime, as described in \cite{Chandra2018}, begins with generating the Farey sequence of order $N$, $F_N$. This is the sequence of irreducible fractions in the domain $[0, 1]$ with denominator less than or equal to $N$. This sequence may be easily generated using the mediant property of the Farey sequence. Beginning with $\frac{a_1}{b_1} = \frac{0}{1}$ and $\frac{a_2}{b_2} = \frac{1}{1}$, the mediant fraction, $\frac{a_3}{b_3}$, may then be generated using the rule
\begin{align*}
\frac{a_3}{b_3} = \left( \frac{a_1 + a_2}{b_1 + b_2} \right).
\end{align*}
This may be performed recursively, stopping when $\frac{a_3}{b_3} = \frac{1}{N}$, to generate all elements of the sequence~\cite{Hardy1979}.

Each of these fractions is then mapped to a grid position using the relation $\frac{a}{b} \to (b, a)$, where $(b, a)$ represents $b$ pixels across and $a$ pixels up. It may be observed that these points represent all possible directions from the origin in the first octant of the grid. To generate the points corresponding to the rest of the plane, this octant is flipped and mirrored. 

The generated points are then sorted by their Euclidean distance from the origin ($L^2$ norm), $\ell_2 = a^2 + b^2$, and those closest to the origin are selected (the exact number chosen is dependent on the Katz criterion, which will be defined shortly). Each point is then mapped to a periodic line given by all of its integer multiples computed modulo $N$, \textit{i.e.} any points on the line that would lie outside the grid are wrapped around back into it. 

As mentioned, the number of periodic lines required is governed by the Katz criterion. This is given by
\begin{equation*} 
K = \frac { \max \left ({\sum _{j=0}^{\mathcal {N}-1} |a_{j}|, \sum _{j=0}^{\mathcal {N}-1} | b_{j}|}\right)}{N},
\end{equation*}
where the minimum information required for an exact reconstruction is when $K = 1$ \cite{Katz1979}; however, $K$ may be greater than this. Using the box-counting method, it was determined that the fractal constructed with $N = 257$ had a Minkowski-Bouligand dimension of 1.79; however, no attempt was made to find this dimension analytically and the reason why the self-similar, repeated copies of the central circular pattern arose from the described construction could not be explained.


\subsection{Fourier Signatures}
Work has been conducted in the past on using the unique \ac{DFT} characteristics of different image classes, known as Fourier signatures, to tailor-make MRI sampling patterns for specific tasks. Two such examples are feature-recognising \ac{MRI} \cite{Cao1993} and MRI with encoding by \ac{SVD} \cite{Zientara1994}, which were found to be equivalent given the same training data \cite{Cao1995}.

Using feature-recognising \ac{MRI}, $32 \times 16$ pixel simulated images were able to be significantly better reconstructed than when using a direct inverse Fourier transform; however, the authors did not apply this technique to real \ac{MRI} data. Furthermore, no consideration was given to the nature of the sampling determined by this method. It is foreseeable that for a given family of images it would be just as difficult to implement as the random pattern suggested by \ac{CS}. Thus, although these two techniques demonstrate that there can be advantages to using sampling patterns tailored to particular image classes in \ac{MRI}, neither is able to guarantee that these patterns are physically implementable on \ac{MRI} hardware or that they lead to any significant acquisition acceleration.

The \ac{ChaoS} fractal, on the other hand, is theoretically guaranteed to be feasibly implementable in hardware due to its construction from periodic lines in the \ac{DFT}. Thus, a method to generate custom \ac{ChaoS} fractals would combine the ability to tailor a sampling pattern to a specific image class, as in feature-recognising \ac{MRI}, with turbulent noise of \ac{ChaoS}.

\section{The Kaleidoscope Transform}
Given a positive integer downsampling factor, $\nu$, and integer smear factor, $\sigma$, the $\nu, \sigma$-\ac{KT} converts a sequence, image, or array of any dimension into $\nu$ evenly spaced, downsampled copies of itself along each axis, each scaled by a factor of $\sigma$, as presented in Fig. \ref{fig: Darter Transforms} and \ref{fig: Smear factor example}.

\begin{figure}[ht]
\centering
\includegraphics[width=\linewidth]{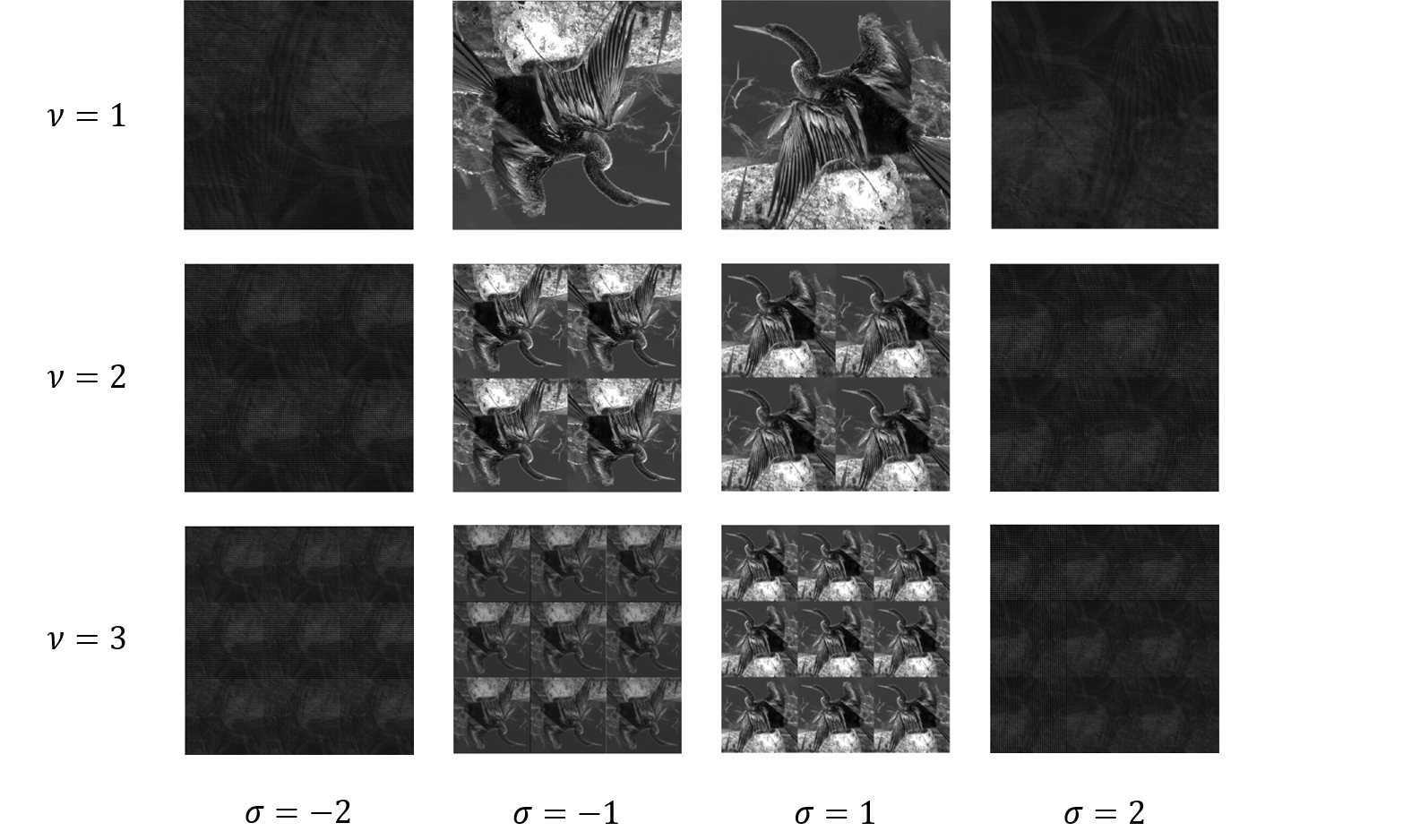}
\caption{$\nu,\sigma$-\ac{KT}s of a sample $512 \times 512$ image. The original image (source: \cite{UGR2003}) is presented in the $(\nu, \sigma) = (1,1)$ position.}
\label{fig: Darter Transforms}
\end{figure}

The \ac{KT} will be based on the kaleidoscope mapping and it is convenient to define two such mappings: an upper and a lower. The intuition behind these mappings is further explained in Supplementary Material A: \nameref{Supp A}.
\begin{definition}[Upper Kaleidoscope Mapping]
The upper $\nu, \sigma$-kaleidoscope mapping over $\mathbb{Z}_N$, where $N, \nu \in \mathbb{Z}^+$ and $\sigma \in \mathbb{Z}$, written $\kappa_U: \mathbb{Z}_N \to \mathbb{Z}_N$, is defined by \label{def: Upper kaleidoscope Mapping}
\begin{align*}
\kappa_U[n; \nu, \sigma] &=  \left( \left \lceil \frac{N}{\nu} \right \rceil (n \bmod \nu)  + \sigma \left \lfloor \frac{n}{\nu} \right \rfloor \right) \bmod N.
\end{align*} 
\end{definition}
\begin{definition}[Lower Kaleidoscope Mapping]
The lower $\nu, \sigma$-kaleidoscope mapping over $\mathbb{Z}_N$, where $N, \nu \in \mathbb{Z}^+$ and $\sigma \in \mathbb{Z}$, written $\kappa_L: \mathbb{Z}_N \to \mathbb{Z}_N$, is defined by \label{def: Lower kaleidoscope Mapping}
\begin{align*}
\kappa_L[n; \nu, \sigma] &=  \left( \left \lfloor \frac{N}{\nu} \right \rfloor (n \bmod \nu)  + \sigma \left \lfloor \frac{n}{\nu} \right \rfloor \right) \bmod N.
\end{align*} 
\end{definition}
These may then be combined into a single, unified kaleidoscope mapping as follows, where the notation of $0^+$ versus $0^-$ is used to distinguish between cases when $\sigma = 0$. 
\begin{definition}[Kaleidoscope Mapping]
The $\nu, \sigma$-kaleidoscope mapping over $\mathbb{Z}_N$, where $N, \nu \in \mathbb{Z}^+$ and $\sigma \in \mathbb{Z}$, written $\kappa: \mathbb{Z}_N \to \mathbb{Z}_N$, is defined by
\begin{align*}
\kappa[n; \nu, \sigma] &=  \begin{cases}
\kappa_L[n; \nu, \sigma], & \sigma \leq 0^{-} \\
\kappa_U[n; \nu, \sigma], & \sigma \geq 0^{+}.
\end{cases}
\end{align*} 
\end{definition}

At first the use of this notation may appear to restrict lower and upper kaleidoscope mappings to non-positive and non-negative smear factors respectively; however, it follows simply from the definition of each mapping that smear factors that are congruent modulo $N$ give equivalent mappings. Thus, any desired smearing may be achieved with either the upper or lower mapping.
\begin{figure}[ht]
\centering
\includegraphics[width = \linewidth]{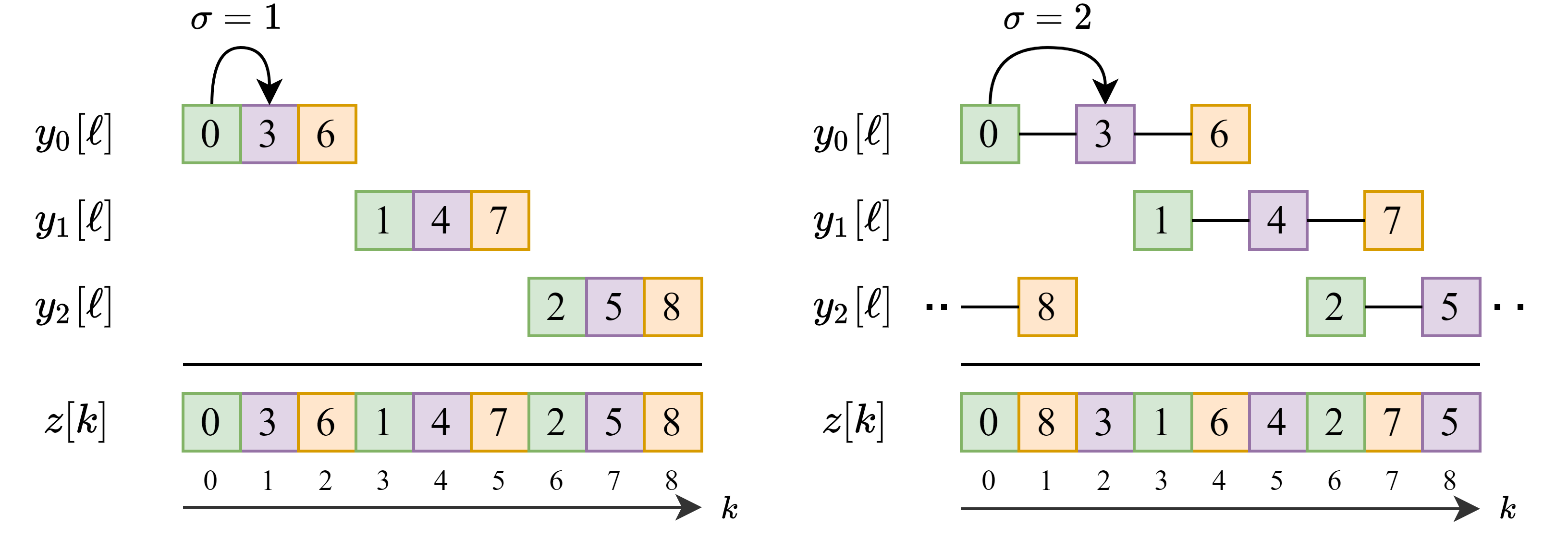}
\caption{The effect of altering smear factor, $\sigma$, on the downsampling and concatenating of the sequence $x[n] = n$ for $n \in \mathbb{Z}_9$, with a downsampling factor of $\nu = 3$. The cases with $\sigma = 1$ and $\sigma = 2$ are presented on the left and right respectively.}
\label{fig: Smear factor example}
\end{figure}

The \ac{KT} moves each element of a sequence to its new position determined by the kaleidoscope mapping, summing any elements that are mapped to the same position. It is defined as follows. 
\begin{definition}[Kaleidoscope Transform]
The  $\nu, \sigma$-\ac{KT} over $\mathbb{Z}_N$, where $N, \nu \in \mathbb{Z}^+$ and $\sigma \in \mathbb{Z}$, of a sequence $x[n]$ for $n \in \mathbb{Z}_N$, is defined by
\begin{align*}
\mathcal{K}_{\nu, \sigma}\{x[n] \}[k] &= \sum_{n=0}^{N-1} x[n] \cdot \delta[k - \kappa[n; \nu, \sigma]]
\end{align*}
for $k \in \mathbb{Z}_N$, where $\delta$ is the unit sample function and $\kappa$ is the kaleidoscope mapping over $\mathbb{Z}_N$. \label{def: kaleidoscope Transform}
\end{definition}
This definition may easily be generalised to higher dimensions by simply performing the kaleidoscope mapping (potentially with different $\nu$ and $\sigma$ values) on the index along each dimension.

We now prove an important result that relates the kaleidoscope mapping (and hence transform) to multiplication in modular arithmetic. 

\begin{theorem}[Kaleidoscope-Multiplication Theorem]
If given $N, \nu \in \mathbb{Z}^+$ and $\sigma \in \mathbb{Z}$ such that $|\sigma| < \nu$, we may write that $N = L\nu - \sigma$ for some $L \in \mathbb{Z^+}$, then
\begin{align*}
\kappa[n; \nu, \sigma] &= Ln \bmod N,
\end{align*}
where $\kappa[n; \nu, \sigma]$ is the $\nu,\sigma$-kaleidoscope mapping over $\mathbb{Z}_N$.
\end{theorem}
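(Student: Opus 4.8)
\textit{Proof proposal.} The plan is to reduce the kaleidoscope mapping to an explicit linear expression in the quotient $\lfloor n/\nu\rfloor$ and the remainder $n\bmod\nu$, and then to observe that $Ln\bmod N$ collapses to exactly the same expression because $N = L\nu - \sigma$ forces $L\nu \equiv \sigma \pmod{N}$.

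First I would pin down $\lceil N/\nu\rceil$ and $\lfloor N/\nu\rfloor$ under the hypothesis. When $\sigma \ge 0$ (so $0 \le \sigma < \nu$ by $|\sigma| < \nu$), we have $N/\nu = L - \sigma/\nu$ with $0 \le \sigma/\nu < 1$, hence $\lceil N/\nu\rceil = L$; when $\sigma < 0$ (so $0 < |\sigma| < \nu$), we have $N/\nu = L + |\sigma|/\nu$ with $0 < |\sigma|/\nu < 1$, hence $\lfloor N/\nu\rfloor = L$. The piecewise definition of $\kappa$ uses the upper mapping (with coefficient $\lceil N/\nu\rceil$) precisely when $\sigma \ge 0^{+}$ and the lower mapping (with coefficient $\lfloor N/\nu\rfloor$) precisely when $\sigma \le 0^{-}$, so in every case the coefficient of $(n\bmod\nu)$ equals $L$, giving $\kappa[n;\nu,\sigma] = \bigl(L\,(n\bmod\nu) + \sigma\lfloor n/\nu\rfloor\bigr)\bmod N$.

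Next I would write the Euclidean decomposition $n = \nu\lfloor n/\nu\rfloor + (n\bmod\nu)$, so that $Ln = L\nu\lfloor n/\nu\rfloor + L\,(n\bmod\nu)$. Since $N = L\nu - \sigma$ yields $L\nu \equiv \sigma \pmod{N}$, reducing modulo $N$ gives $Ln \equiv \sigma\lfloor n/\nu\rfloor + L\,(n\bmod\nu) \pmod{N}$, which is exactly the reduced form of $\kappa[n;\nu,\sigma]$ found above. Taking both sides mod $N$ completes the argument.

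I do not expect a genuine obstacle here; the only delicate point is the bookkeeping around the $0^{+}/0^{-}$ convention — one must check that the ceiling-versus-floor choice in the two branches of $\kappa$ is aligned with the sign of $\sigma$ so that the coefficient of $(n\bmod\nu)$ is $L$ in all cases, including $\sigma = 0$, where $N/\nu = L$ exactly and the upper mapping is used. I would also remark in passing that $N = L\nu - \sigma \ge (L-1)\nu + 1 \ge 1$, so $N$ is a valid positive modulus, and that $L \ge 1$ is supplied by hypothesis, so $Ln \bmod N$ is well defined.
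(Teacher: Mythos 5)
Your proposal is correct and follows essentially the same route as the paper: both rest on identifying the coefficient of $(n \bmod \nu)$ as $L$ and then applying the Euclidean identity $n = \nu\lfloor n/\nu\rfloor + (n\bmod\nu)$ together with $L\nu \equiv \sigma \pmod{N}$, merely running the algebra in the opposite direction. Your explicit verification that $\lceil N/\nu\rceil = L$ or $\lfloor N/\nu\rfloor = L$ in the respective sign cases is a detail the paper leaves implicit, and is a worthwhile addition.
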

\begin{proof}
Suppose that the premise above holds. Now, we may observe from the definition of the kaleidoscope mapping that,
\begin{align*}
    \kappa[n; \nu, \sigma] &= \left( L (n \bmod \nu) + \sigma \left \lfloor \frac{n}{\nu} \right \rfloor \right) \bmod N.
\end{align*}
Now, by the definition of the mod operator,
\begin{align*}
\kappa[n; \nu, \sigma] &= \left( L \left( n - \nu \left \lfloor \frac{n}{\nu} \right \rfloor \right) + \sigma \left \lfloor \frac{n}{\nu} \right \rfloor \right) \bmod N\\
&= \left( L n - \left(  L \nu - \sigma \right) \left \lfloor \frac{n}{\nu} \right \rfloor \right) \bmod N \\
&= \left( L n - N \left \lfloor \frac{n}{\nu} \right \rfloor \right) \bmod N \tag{as $N = L\nu - \sigma$} \\
&= Ln \bmod N.
\end{align*}
\end{proof}

An important corollary of this result is that certain multiplications by an integer $L$ mod $N$ are equivalent to kaleidoscope mappings. Of particular interest are multiplications that result in unity-smear kaleidoscope mappings (those with $|\sigma|$ = 1). These occur when $L$ is a factor of $N \pm 1$. 

Multiplications by factors of $mN \pm 1$ for positive integers $m$ also approximate unity-smear kaleidoscope mappings. This is because such a multiplication mod $N$ may be expressed as the same multiplication mod $mN$, which is a unity-smear kaleidoscope mapping of the sequence zero-padded by a factor of $m$, followed by a reduction mod $N$. This reduction causes each repeated pattern to fill the zero-padded gaps of its neighbours, approximating a unity-smear kaleidoscope mapping. 


\section{Chaotic Sensing Through the Kaleidoscope}
The \ac{ChaoS} fractal on an $N \times N$ grid is traditionally thought of as the superposition of $R$ discrete periodic lines (where $R$ is determined by the Katz criterion), each given by $N$ multiples of an initial Farey vector; however, it may just as easily be expressed as the superposition of $\lceil N/2 \rceil$ scaled images, each containing the same multiple of all $R$ Farey vectors, along with their reflections, as presented in Fig. \ref{fig: CSTK}. It follows simply that those scaled images containing Farey vectors multiplied by factors of $mN \pm 1$ for positive integer $m$ will approximate unity-smear \ac{KT}s, containing downsampled, repeated copies of the original image (in the case when the vectors are sorted by their $L^2$ norm as in the classic fractal, a circle). Their superposition then gives the repeated, self-similar patterns observed in the \ac{ChaoS} fractal.

\begin{figure}
\centerline{\includegraphics[width=0.9\columnwidth]{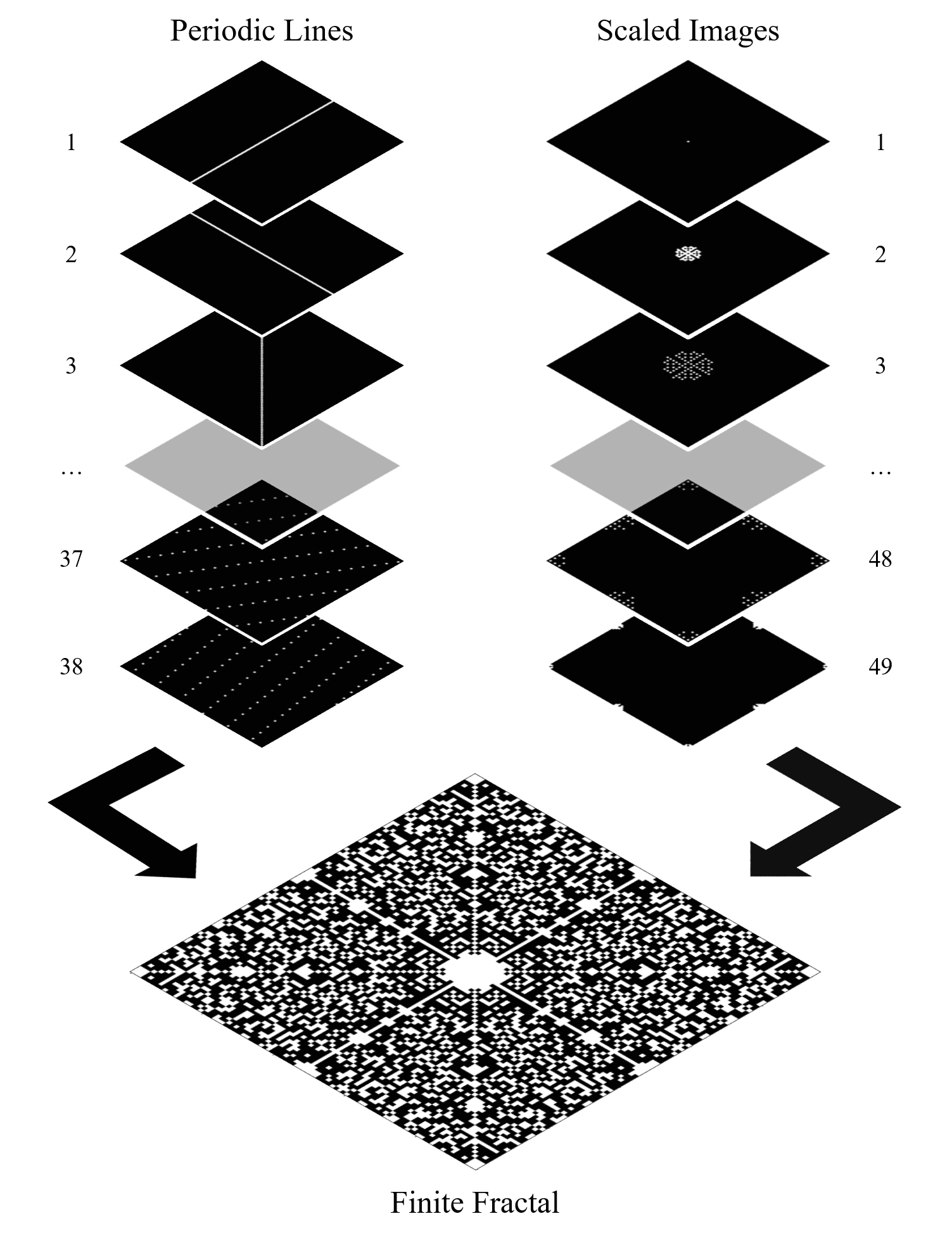}}
\caption{The ChaoS fractal on a $97 \times 97$ grid, represented as both the superposition of $R=38$ periodic lines and of $\lceil N / 2 \rceil = 49$ scaled images.}
\label{fig: CSTK}
\end{figure}

This theory for the fractal nature of the ChaoS fractal predicts that the repeated pattern need not be circular, and the grid on which the fractal is constructed need not be square (only that $M\pm1$ and $N\pm1$ share common factors for an $M \times N$ grid). Moreover, it predicts that fractals such as these may be generated in any number of dimensions. 

By using other $L^p$ norms to sort the Farey vectors used to generate the fractal, any superellipse may be realised as central pattern. Applying a linear transformation to each vector prior to calculating its norm allows for stretched and rotated variations of these central patterns. Furthermore, any star-shaped polygon may be realised as a central pattern by sorting the Farey vectors using what we refer to as the polygon quasi-norm. This calculates the minimum factor by which the desired polygon must be scaled for it to contain the given vector, as described in Fig. \ref{fig: n-quasinorm}. Note that in practice polygon central patterns are limited to those that are 180$^\circ$-rotationally-symmetric as the fractal is constructed from periodic lines. 

\begin{figure}[ht]
\centering
\includegraphics[width=0.9\linewidth]{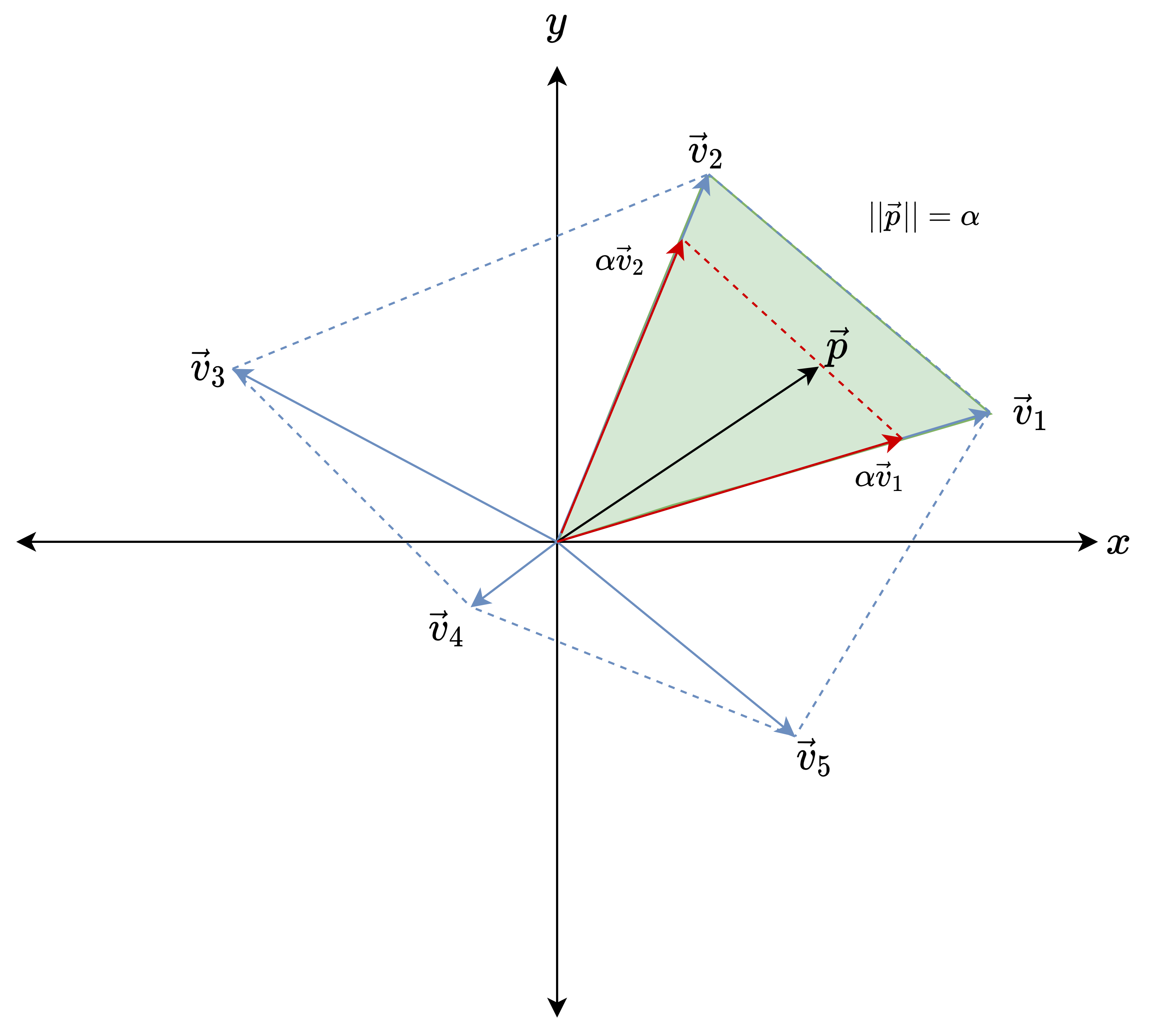}
\caption{The polygon quasi-norm of a vector $\vec{p}$ is computed by first determining the triangular region of the star-shaped polygon in which $\vec{p}$ lies (green region), and then calculating the scale factor $\alpha$ required such that $\vec{p}$ is co-linear with its neighbouring vertices.}
\label{fig: n-quasinorm}
\end{figure}

Further extensions to the \ac{ChaoS} fractal may be realised by diverging entirely from the paradigm of using periodic lines to construct the sampling pattern. Instead, the fractal may be explicitly constructed from the superposition of scaled copies of a desired pattern. Of particular interest are fractals constructed from feasibly implementable k-space trajectories other than lines, such as spirals. When using this method, we may also explicitly limit the scaled copies of the central pattern to those that give unity-smear \acp{KT}. Fig. \ref{fig: New fractals} presents a selection of such patterns for \ac{ChaoS} constructed using these methods. 

\begin{figure}[ht]
\centering
\begin{subfigure}{0.5\columnwidth}
  \centering
  \includegraphics[width=0.95\linewidth]{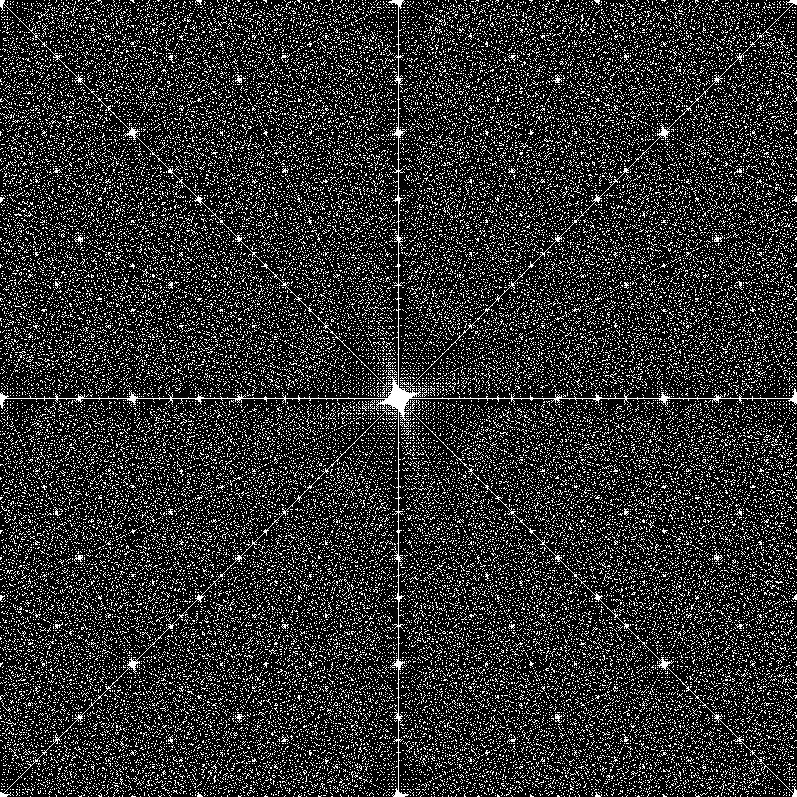}
  \caption{}
\end{subfigure}%
\begin{subfigure}{0.5\columnwidth}
  \centering
  \includegraphics[width=0.95\linewidth]{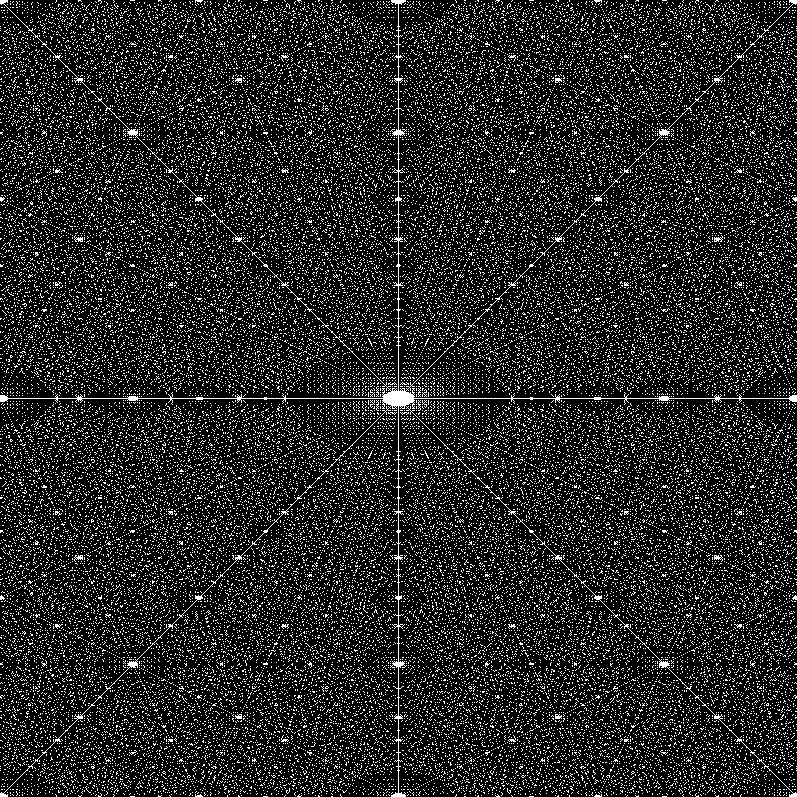}
  \caption{}
\end{subfigure}\par\medskip
\begin{subfigure}{\columnwidth}
  \centering
  \includegraphics[height=0.975\linewidth, angle=90]{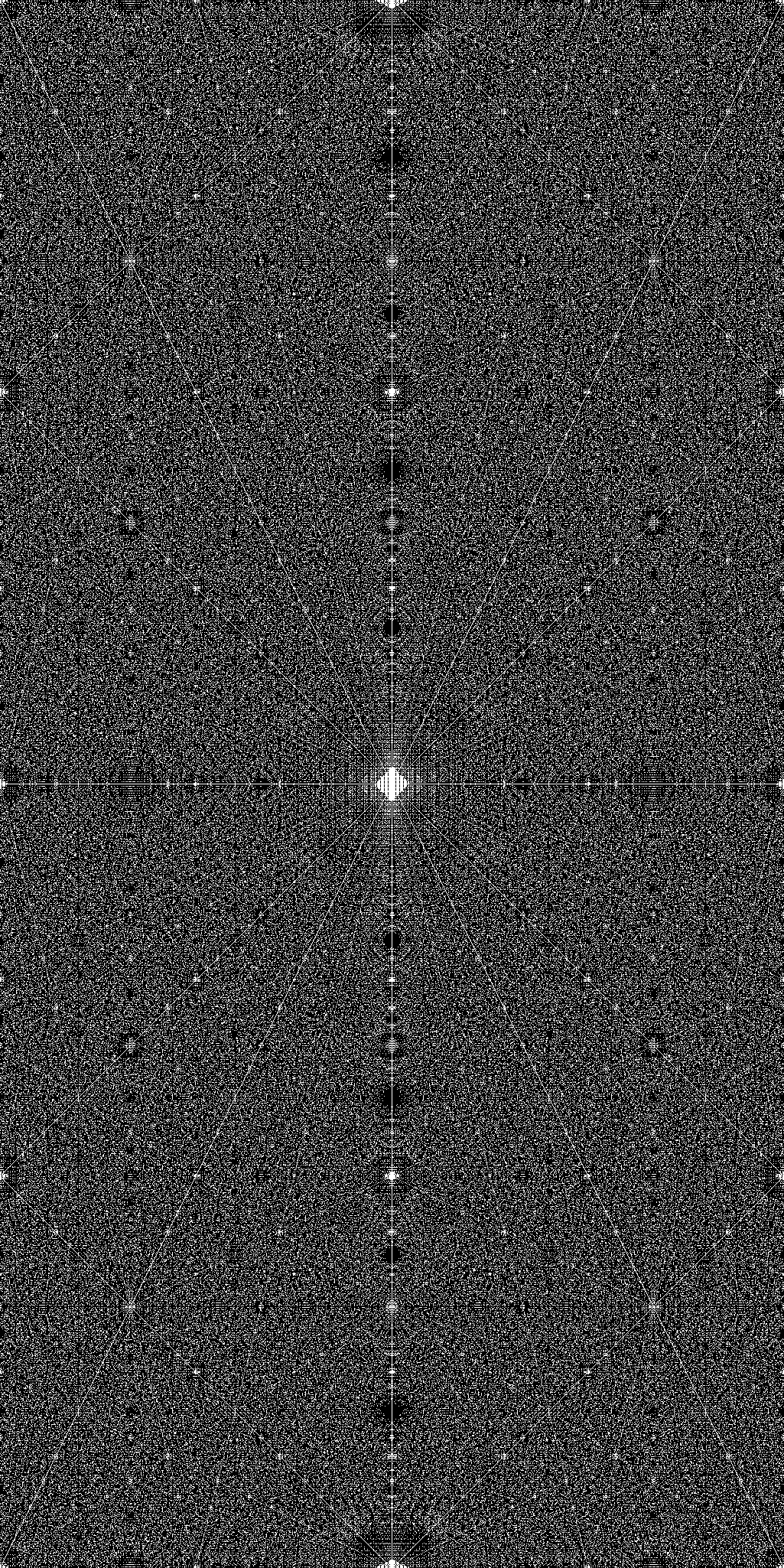}
  \caption{}
\end{subfigure}\par\medskip
\begin{subfigure}{0.5\columnwidth}
  \centering
  \includegraphics[width=0.95\linewidth]{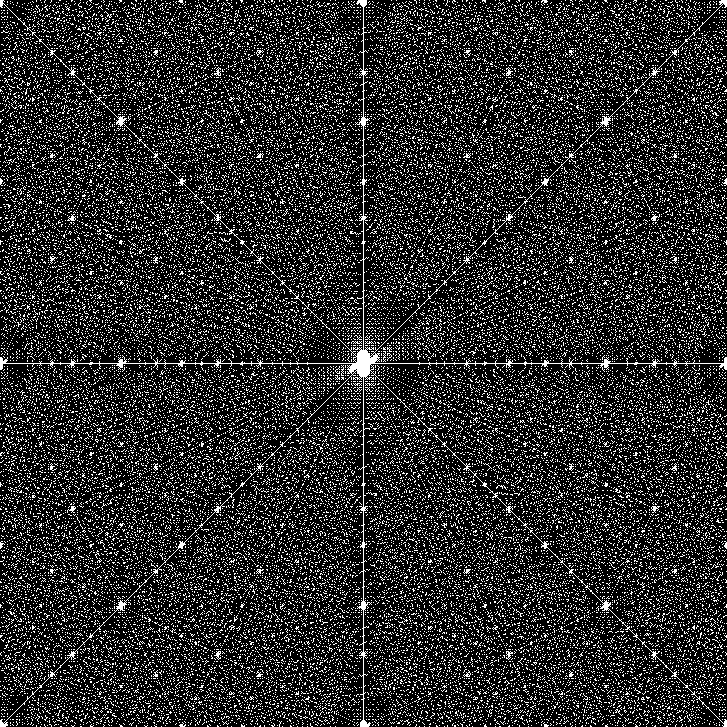}
  \caption{}
\end{subfigure}%
\begin{subfigure}{0.5\columnwidth}
  \centering
  \includegraphics[width=0.95\linewidth]{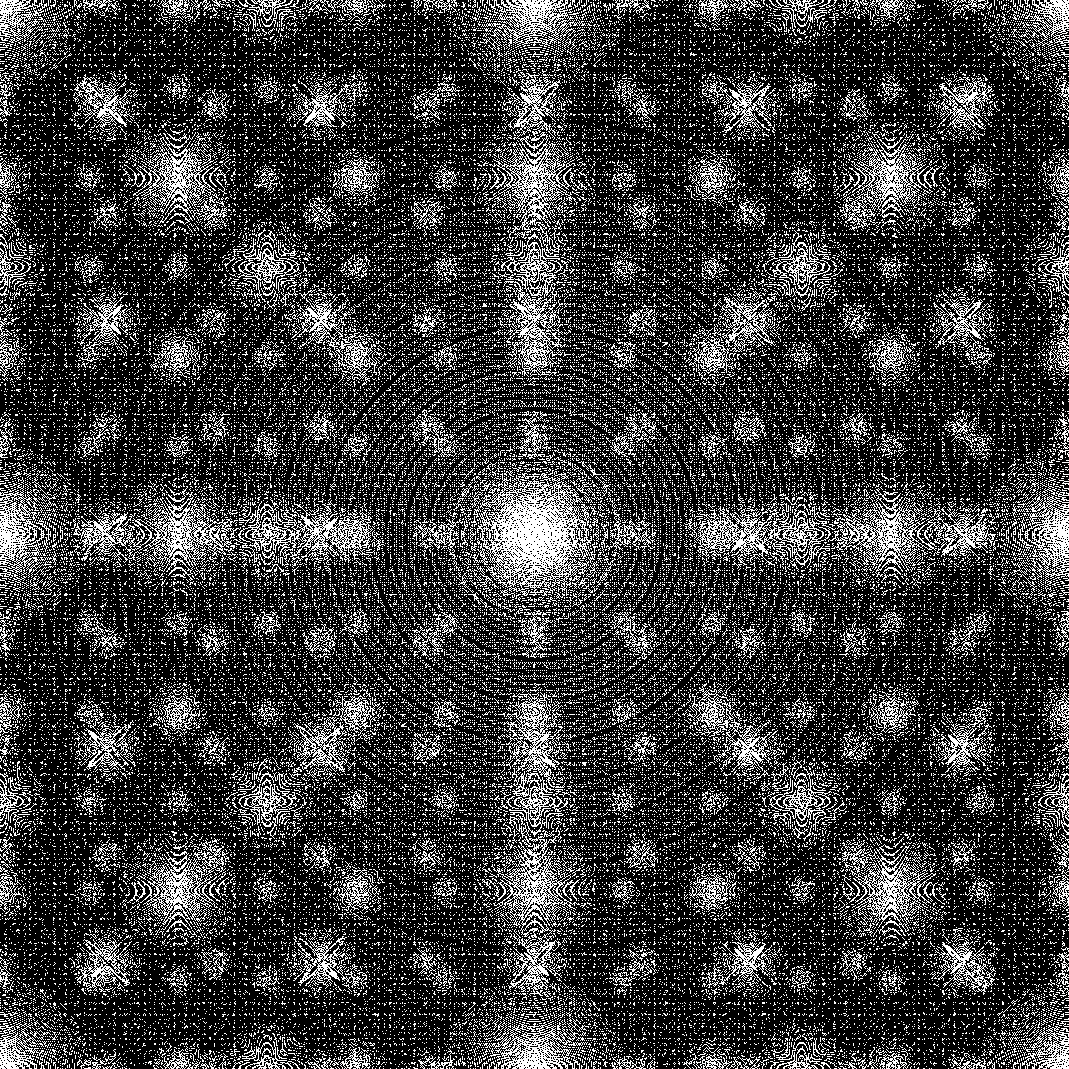}
  \caption{}
\end{subfigure}
\caption{A selection of novel \ac{ChaoS} fractals. These are generated: (a) using the $L^{0.5}$ norm after anti-clockwise rotation by $15^\circ$ on a $727\times727$ grid ($R = 145$); (b) using the $L^2$ norm after horizontal dilation by a factor of two on a $727\times727$ grid ($R = 121$); (c) using the $L^1$ norm on a $1025\times2049$ grid ($R=265)$; (d) using the polygon quasi-norm with an arbitrary star-shaped polygon on a $727\times727$ grid ($R=130$); and (e) explicitly from those multiples of a manually defined spiral pattern that give unity-smear \acp{KT} on a $1069\times1069$ grid.}
\label{fig: New fractals}
\end{figure}
\section{Conclusion}
In this paper the novel \acl{KT} was introduced, which formalises and extends upon the idea of downsampling and concatenating an image with itself. It was then demonstrated mathematically that certain scaling operations in the \ac{DFT} are equivalent to \acp{KT}. This was used to explain the previous fractal nature of the \ac{ChaoS} sampling pattern, and to present a variety of similar, novel patterns, along with methods for their construction for the \ac{DFT}. Demonstrating that such a rich family of fractal sampling patterns exists expands upon the potential utility of the \ac{DFT} for chaotic systems. In particular, it lays the foundation for a bespoke \ac{ChaoS} methodology, in which the sampling pattern used is tailored to the type of object being imaged, potentially improving speed and reconstruction.

\newpage
\printbibliography

\newpage
\section*{Supplementary Material}

\subsection{Motivating the Kaleidoscope Mapping}
\label{Supp A}

Previously, the definitions of the kaleidoscope mapping and transform were presented with minimal justification. Here, their motivation is presented in more detail.

Suppose we have a sequence of length $N$, $x[n]$, which we wish to downsampled by a factor of $\nu$ (for now, we assume that $\nu | N$). The canonical approach is to take every $\nu^\text{th}$ element of $x[n]$ beginning with $x[0]$ to form the new, downsampled subsequence; however, we could just as easily have started with $x[1]$, or $x[2]$, or so on, all the way up to $x[\nu - 1]$. This gives us $\nu$ possible downsampled subsequences, each with different elements. We denote such a downsampled subsequence starting at the $m^\text{th}$ element of $x[n]$ as $y_m[\ell]$. Each of these sequences has length $L = \frac{N}{\nu}$. We may then concatenate these possible downsampled subsequences together to give the sequence $z[k]$, also of length $N$. An example of this process is provided in Fig. \ref{fig: Kaleidoscope Flowchart}. 

\begin{figure}[ht]
\centering
\includegraphics[width = \linewidth]{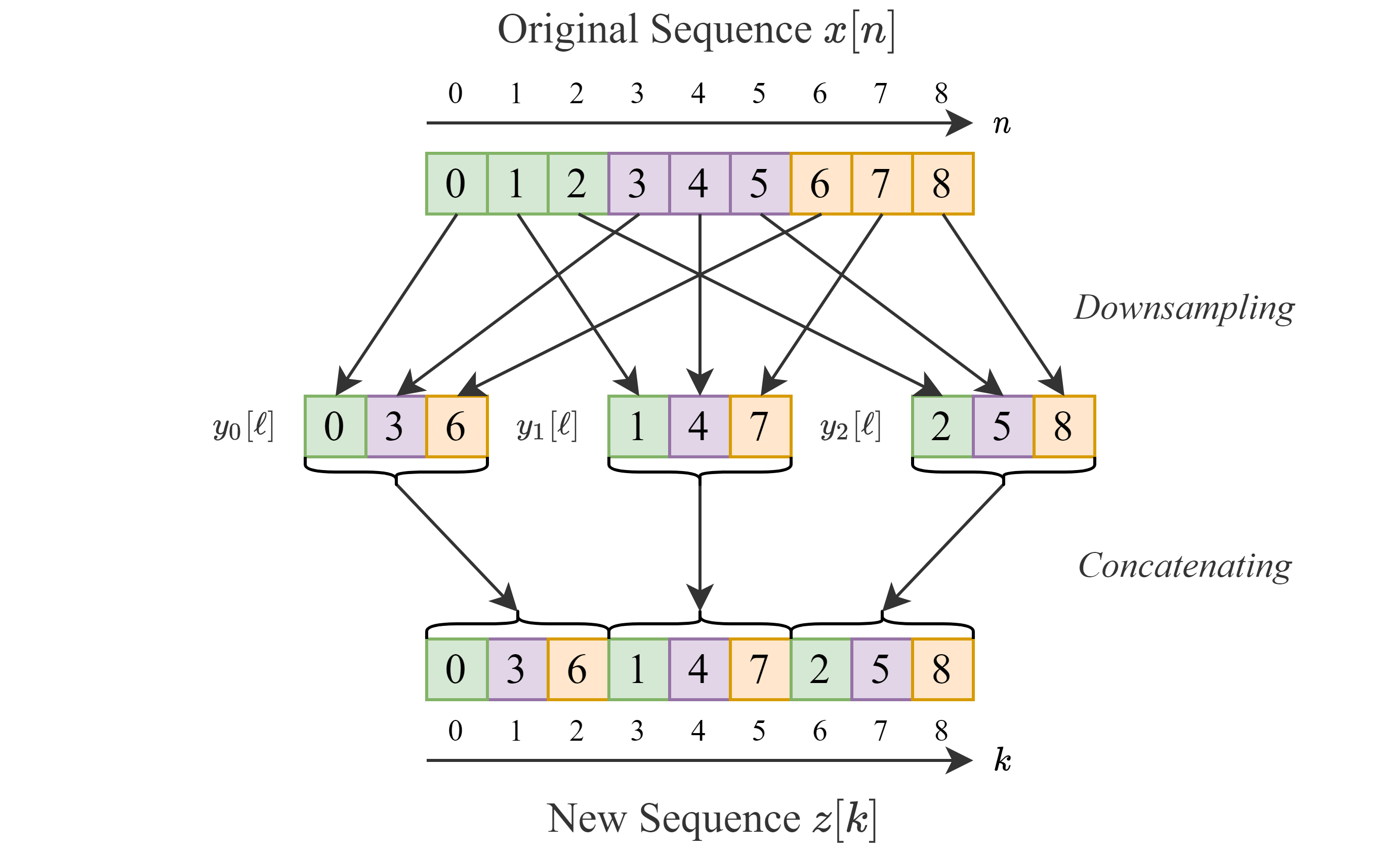}
\caption{Downsampling and concatenating $x[n] = n$ for $n \in \mathbb{Z}_9$ with downsampling factor $\nu=3$.}
\label{fig: Kaleidoscope Flowchart}
\end{figure}

The obvious question that then arises is how we may determine which positions in $x[n]$ correspond to which positions in $z[k]$. In other words, we seek a mapping $\kappa: n \mapsto k$. It is relatively straightforward to see that the final index $k$ corresponding to index $\ell$ of downsampled subsequence $y_m[\ell]$ is given by
\begin{align*}
    k &=Lm + \ell.
\end{align*}
Noting how each downsampled sequence is constructed, it then follows that
\begin{align*}
    k&=\frac{N}{\nu}(n \bmod \nu) + \floor{\frac{n}{\nu}}.
\end{align*}

This closely mirrors the form of the kaleidoscope mapping; however, there are some small differences. Notably, this mapping is limited to cases where $\nu | N$. There are two approaches we may take to generalising it to any positive integer downsampling factor, $\nu$. Firstly, we can round the length, $L$, of each subsequence up to the nearest integer by letting $L = \left \lceil \frac{N}{\nu} \right \rceil$. This would leave us with $\nu - 1$ subsequences of equal length, and one shorter subsequence. Alternatively, we can round $L$ down to the nearest integer by letting $L = \left \lfloor \frac{N}{\nu} \right \rfloor$. This would leave us with $\nu - 1$ subsequences of equal length and one longer subsequence. As stated in the body of the paper, it is useful to consider both of these approaches, and they correspond to the upper and lower kaleidoscope mappings respectively.

We may further generalise these mappings by introducing a smear factor, $\sigma$, which sets the spacing of each downsampled sequence when they are combined. When $\sigma = 1$, the sequences are simply concatenated, whereas other $\sigma$ values result in interleaving of the downsampled sequences, as demonstrated in Fig. \ref{fig: Smear factor example}. This generalisation is chiefly motivated by how it allows for the kaleidoscope-multiplication theorem to be stated and proved, as shown in the body of the paper.

\subsection{Additional Fractal Patterns}
\label{Supp B}

Moving away from the context of \ac{MRI}, the kaleidoscope transform may be used to generate both familiar and novel patterns, as presented in Fig. \ref{fig: Misc fractals}. Examples such as these demonstrate both the utility of the kaleidoscope transform in understanding existing self-similar patterns and its potential use in creating varied patterns for new tasks. 

\begin{figure}[ht]
\centering
\begin{subfigure}{0.5\columnwidth}
  \centering
  \includegraphics[width=0.95\linewidth]{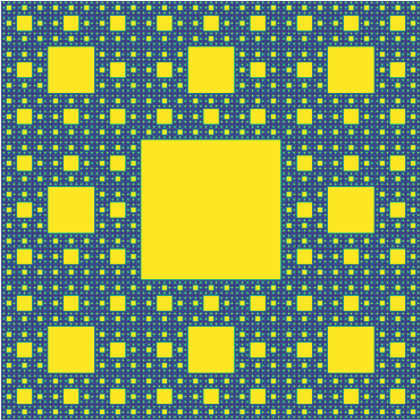}
  \caption{}
\end{subfigure}%
\begin{subfigure}{0.5\columnwidth}
  \centering
  \includegraphics[width=0.95\linewidth]{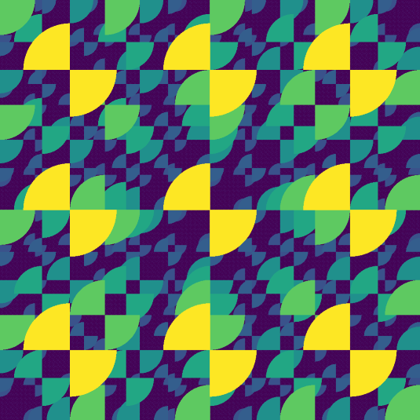}
  \caption{}
\end{subfigure}
\caption{Two patterns generated using kaleidoscope transforms: (a) an approximation to the Sierpinski carpet generated on a $728 \times 728$ grid, using a central pattern containing all of the vectors in the middle ninth of \ac{DFT} space and only kaleidoscope transforms with $\sigma = 1$; and (b) a bowtie-kaleidoscope pattern, made on a $1069 \times 1069$ grid, created using all of the vectors within a bowtie pattern one-sixth the width of \ac{DFT} space, and only kaleidoscope transforms with $\sigma = 1$, with each scaled image coloured differently.}
\label{fig: Misc fractals}
\end{figure}

\end{document}